\newtheorem{definition}{Definition}
\newtheorem{theorem}{Theorem}
\newtheorem{assumption}{Assumption}
\newtheorem{proof}{Proof}
\newtheorem{corollary}{Corollary}
\newtheorem{example}{Example}
\newtheorem{remark}{Remark}
\newcommand{\qed}{\hfill$\Box$}
\begin{document}
\title{A Unified Representation of Density-Power-Based Divergences Reducible to M-Estimation} 

\author{%
  \IEEEauthorblockN{Masahiro Kobayashi}
  \IEEEauthorblockA{Information and Media Center\\
                    Toyohashi University of Technology\\
                    Toyohashi, Aichi, Japan\\
                    Email: kobayashi@imc.tut.ac.jp}
  }

\maketitle

\begin{abstract}
Density-power-based divergences are known to provide robust inference procedures against outliers, and their extensions have been widely studied.
A characteristic of successful divergences is that the estimation problem can be reduced to M-estimation.
In this paper, we define a norm-based Bregman density power divergence (NB-DPD)---density-power-based divergence with functional flexibility within the framework of Bregman divergences that can be reduced to M-estimation.
We show that, by specifying the function $\phi_\gamma$, NB-DPD reduces to well-known divergences, such as the density power divergence and the $\gamma$-divergence.
Furthermore, by examining the combinations of functions $\phi_\gamma$ corresponding to existing divergences, we show that a new divergence connecting these existing divergences can be derived.
Finally, we show that the redescending property, one of the key indicators of robustness, holds only for the $\gamma$-divergence.
\end{abstract}

\section{Introduction}
Divergence functions are measures for quantitatively evaluating the difference between two probability distributions, and they play a key role in fields such as statistics, machine learning, and information theory.
Density power divergence (DPD) \cite{beta-div} (also called $\beta$-divergence \cite{Cichocki2010}), which enables substitution by the empirical distribution and provides a robust inference procedure against outliers, is widely used as a standard method in robust estimation.
This divergence has a nonnegative parameter $\gamma\geq0$ and smoothly connects the Kullback--Leibler (KL)-divergence ($\gamma=0$), which corresponds to the maximum likelihood estimation with the $L_2$-divergence \cite{scott2001} ($\gamma=1$), which corresponds to the robust estimation.
Consequently, adjusting the parameter $\gamma$ enables control over the trade-off between robustness to outliers and efficiency under the model.

Another well-known density-power-based divergence is the $\gamma$-divergence \cite{gamma-div, type0-div}.
Like DPD, $\gamma$-divergence enables substitution by the empirical distribution and the parameter $\gamma$ controls robustness against outliers and efficiency under the model.
Moreover, it can reduce latent bias close to zero even when the proportion of outliers in the data is large \cite{gamma-div, fujisawa2013}, i.e., it has higher robustness than the DPD.

Following the success of these density-power-based divergences, several extensions have been investigated.
Jones et al. \cite{type0-div} showed that by replacing the logarithmic function in the $\gamma$-divergence with a power function, both the DPD and $\gamma$-divergence can be unified.
Ray et al. \cite{FDPD} defined the functional density power divergence (FDPD) by replacing the logarithmic function with a more general function and investigated the valid conditions for it to qualify as a divergence.
Kuchibhotla et al. \cite{bridge-div} proposed the bridge density power divergence (BDPD) by taking a convex combination of the estimating equations for the DPD and $\gamma$-divergence.
Gayen et al. \cite{Gayen2024} extended this approach by considering linear combinations with nonnegative weights, thereby broadening the parameter range of the BDPD.
Focusing on the fact that both the DPD and the $\gamma$-divergence enable empirical distribution substitution and remain invariant under affine transformations of random variables, Kanamori and Fujisawa \cite{holder-div1} proposed the H\"{o}lder divergence (HD). Furthermore, they derived the Bregman--H\"{o}lder divergence (BHD) as the intersection of the HD and Bregman divergence \cite{holder-div1}.

The above the general divergences all belong to the class of non-kernel divergences \cite{Jana2019}, which enable the substitution of the empirical distribution.
However, only a more restricted subset of these non-kernel divergences---those that can be reduced to M-estimation---has been actively studied for estimation purposes.
M-estimation \cite{robust, huberbook} is a method for estimating a model $p_\theta(x)$ by solving the following estimating equation for the parameter $\theta$:
\begin{align}
    \frac{1}{n}\sum_{i=1}^n \psi(x_i, \theta)=0. \label{eq:m_estim_ee}
\end{align}
The estimating equation is obtained by taking the partial derivative of the loss function with respect to $\theta$ and setting it to zero.
When the estimation problem can be expressed as M-estimation, the estimator’s properties follow the established asymptotic theory of M-estimation \cite[Ch. 5]{asympt}.

In this paper, we construct a density-power-based divergence that has the functional flexibility of the Bregman divergence using the $L_{1+\gamma}$-norm and strictly increasing and convex function.
We call this divergence {\it norm-based Bregman density power divergence (NB-DPD)}.
NB-DPD has two degrees of freedom: the power parameter $\gamma\geq0$ and strictly increasing and convex function $\phi_\gamma$, where corresponding estimation method is reduced to M-estimation.
We show that NB-DPD converges to the KL-divergence in the limit as $\gamma \to 0$, and by specifying the strictly increasing and convex function $\phi_\gamma$, it reduces to the DPD, $\gamma$-divergence, and their generalizations. 
We also discuss its relationship with other divergences with functional flexibility. 
Finally, we discuss the robustness of the corresponding estimators in terms of their influence functions \cite{robust, huberbook}, a standard measure of robustness against outliers.
\section{Related divergences}
\subsection{Notation and Definitions}
In this paper, $\langle \cdot \rangle$ denotes an integral with respect to the Lebesgue measure.
Specifically, $\langle p \rangle = \int p(x)\, d\nu(x)$.
Moreover, $\|p\|_{1+\gamma} = \langle p^{1+\gamma} \rangle^{\frac{1}{1+\gamma}}$ represents the $L_{1+\gamma}$-norm, where the power parameter $\gamma$ is a nonnegative real number.
The set of probability distribution is denoted as $\mathcal{P}$.
The set of nonnegative real number is denoted as $\mathbb{R}_+$.
\begin{definition}[Cross-entropy and divergence]
    For any two probability distributions $p,q \in \mathcal{P}$, a functional 
    $d : \mathcal{P} \times \mathcal{P} \to \mathbb{R}$ is called a cross-entropy if it satisfies the following:
    \begin{align*}
        d(q,p) &\geq d(q,q),\\
        d(q,p) &= d(q,q) \ \Leftrightarrow\ q=p \ (a.s.).
    \end{align*}
    We define a divergence by taking the difference of the cross-entropy and ensuring that the following properties are satisfied:
    \begin{align*}
        D(q,p) &= d(q,p) - d(q,q) \geq 0,\\
        D(q,p) &= 0 \ \Leftrightarrow\ q=p \ (a.s.).
    \end{align*}
\end{definition}
\begin{definition}[Equivalence of cross-entropies]\label{def:eqv}
    Two cross-entropies $d(q,p)$ and $\Tilde{d}(q,p)$ are said to be equivalent if there exists a strictly increasing function $\xi : \mathbb{R} \to \mathbb{R}$ such that 
    $\Tilde{d}(q,p) = \xi\bigl(d(q,p)\bigr)$ for all $p,q \in \mathcal{P}$.
\end{definition}
Note that Definition \ref{def:eqv} is the same as \cite[Def. 2.2]{holder-div1}.
Cross-entropies belonging to the same equivalence class yield the same minimizer.
In other words, $\min_p d(q,p)$ and $\min_p \Tilde{d}(q,p)$ have the same solution.

\subsection{Bregman divergence and special cases}
For any $t \in (0,1)$ and $p,q \in \mathcal{P}$ with $p \neq q$, if the following inequality holds,
we say that $\Phi$ is a strictly convex functional:
\begin{align*}
    \Phi(tq(x) + (1-t)p(x)) < t\,\Phi(q(x)) + (1-t)\,\Phi(p(x)).
\end{align*}

\begin{definition}[Bregman cross-entropy and divergence \cite{Evgeni2018, Frigyik2008}]\label{def:bregman}
    Using a strictly convex functional $\Phi: \mathcal{P} \to \mathbb{R}$,
    the Bregman cross-entropy between two probability distributions is defined as follows:
    \begin{align}
        d_\Phi(q,p) = -\Phi(p) - \int\Phi_p^*(x)\left(q(x) - p(x)\right)\,d\nu(x). \label{eq:bregman_cross_ent}
    \end{align}
    Moreover, from $D_\Phi(q,p)=d_\Phi(q,p)-d_\Phi(q,q)$, we define the Bregman divergence by
    \begin{align}
        D_\Phi(q,p) = \Phi(q) - \Phi(p) \;-\; \int \Phi_p^*(x)\,\bigl(q(x) - p(x)\bigr)\,d\nu(x),
        \label{eq:bregman_div}
    \end{align}
    where $\Phi_p^*: S \to \mathbb{R},\, S \subseteq \mathbb{R}^d$ is a subgradient of the strictly convex functional $\Phi$ at $p$.
\end{definition}
The Bregman divergence can be reduced to M-estimation \eqref{eq:m_estim_ee} \cite{Gneiting2007}.
When the strictly convex functional $\Phi$ is defined by a strictly convex function $\varphi: \mathbb{R}_+ \to \mathbb{R}$, i.e., $\Phi(q) = \langle \varphi(q)\rangle$, the Bregman divergence \eqref{eq:bregman_div} is referred to as the separable Bregman divergence \cite{holder-div1}.
Most existing studies have focused on separable Bregman divergences, exploring the choice of $\varphi$ that provides a better trade-off between efficiency and robustness (e.g., \cite{beta-div, bexp-div, Roy2019, Singh2021, u-boost}).
The DPD is an example of a separable Bregman divergence.
The density power cross-entropy (DPCE) and DPD are respectively obtained from the Bregman cross-entropy \eqref{eq:bregman_cross_ent} and divergence \eqref{eq:bregman_div} by setting $\Phi(q) = \bigl(\langle q^{1+\gamma}\rangle-1\bigr)/\gamma$:
\begin{align}
    &d_\gamma^{\rm DP}(q,p) = -\frac{1+\gamma}{\gamma}\left\langle qp^\gamma \right\rangle + \frac{1}{\gamma} + 
    \left\langle p^{1+\gamma}\right\rangle, \label{eq:dpce}\\
    &D_\gamma^{\rm DP}(q,p) = \frac{1}{\gamma}\left\langle q^{1+\gamma}\right\rangle \;-\; \frac{1+\gamma}{\gamma}\left\langle qp^\gamma \right\rangle \;+\; \left\langle p^{1+\gamma}\right\rangle. \label{eq:dp}
\end{align}
Similarly, the pseudo-spherical cross-entropy (PSCE) and pseudo-spherical divergence (PSD) are obtained from the Bregman cross-entropy \eqref{eq:bregman_cross_ent} and divergence \eqref{eq:bregman_div} by setting $\Phi(q)=\bigl(\|q\|_{1+\gamma}-1\bigr)/\gamma$:
\begin{align}
    &d_\gamma^{\rm PS}(q,p) = -\frac{1}{\gamma}\,\frac{\left\langle qp^{\gamma}\right\rangle}{\| p\|_{1+\gamma}^\gamma} + \frac{1}{\gamma}, \label{eq:psce}\\
    &D_\gamma^{\rm PS}(q,p) = \frac{1}{\gamma}\,\|q\|_{1+\gamma} \;-\; \frac{1}{\gamma}\,\frac{\left\langle qp^\gamma \right\rangle}{\| p\|_{1+\gamma}^\gamma}. \label{eq:psd}
\end{align}
The log-$\gamma$-cross-entropy \cite{gamma-div} is obtained by transforming the PSCE \eqref{eq:psce} using $d_\gamma^{\rm \log}(q,p) = -(1+\gamma)/\gamma \,\log\bigl(-\gamma d_\gamma^{\rm PS}(q,p) +1\bigr)$:
\begin{align}
    d_\gamma^{\log}(q,p) = - \frac{1+\gamma}{\gamma}\,\log\bigl\langle qp^\gamma \bigr\rangle + \log \bigl\langle p^{1+\gamma} \bigr\rangle. \label{eq:log_ce}
\end{align}
In the sense of Definition \ref{def:eqv}, \eqref{eq:psce} and \eqref{eq:log_ce} are equivalent cross-entropies.
However, in this paper, we call \eqref{eq:psce} the PSCE and \eqref{eq:log_ce} the log-$\gamma$-cross-entropy to distinguish between them.
The same naming convention applies to the corresponding divergences.
When we simply say $\gamma$-cross-entropy or divergence, we do not distinguish between these.

\subsection{Density-power-based divergences with functional flexibility}
The FDPD \cite{FDPD} and HD \cite{holder-div1} are known examples of density-power-based divergences with functional flexibility.
Both of these divergences include the DPD and $\gamma$-divergence as special cases, as shown Tables \ref{tab:fdpd} and \ref{tab:hd}.
The definitions of FDPD and HD, as well as their special cases, are detailed in Appendices \ref{sec:fdpd} and \ref{sec:hd}.
Although both the FDPD and HD belong to the class of non-kernel divergences \cite{Jana2019}, their corresponding estimation problems are not necessarily reducible to M-estimation. 
However, except for Jones et al.'s \cite{type0-div} general divergence, which is not used in practical estimation problems, existing divergences can be reduced to M-estimation (Tables \ref{tab:fdpd} and \ref{tab:hd}). 
This provides motivation to investigate density-power-based divergences that can be reduced to M-estimation. 
In the next section, we construct density-power-based divergences with functional flexibility within the class of Bregman divergence that are reducible to M-estimation.

\begin{table}[t]
    \centering
    \caption{Special cases of functional density power divergence \cite{FDPD}}
    \begin{tabular}{c|c}
       Divergence  & M-estimation\\
       \hline\hline
       Density Power \cite{beta-div} & \checkmark\\
       log-$\gamma$ \cite{gamma-div, type0-div}& \checkmark\\
       Bridge Density Power \cite{bridge-div, Gayen2024}& \checkmark\\
       Jones et al.'s general \cite{type0-div}&  $\times$\\
       \hline
    \end{tabular}
    \label{tab:fdpd}
\end{table}
\begin{table}[t]
    \centering
    \caption{Special cases of H\"{o}lder divergence \cite{holder-div1}}
    \begin{tabular}{c|c}
       Divergence & M-estimation\\
       \hline\hline
       Density Power \cite{beta-div}& \checkmark\\
       Transformed Pseudo-Spherical$^1$ \cite{type0-div, gamma-div} & \checkmark\\
       Bregman--H\"{o}lder \cite{holder-div1} & \checkmark\\
       \hline
       \multicolumn{2}{l}{$^{\mathrm{1}}$: The cross-entropy is given by the power transformation of \eqref{eq:psce}.}
    \end{tabular}
    \label{tab:hd}
\end{table}
\renewcommand{\arraystretch}{1.5}
\begin{table*}[h]
    \centering
    \caption{Existing divergences and their unified forms corresponding to $\phi_\gamma$ in NB-DPD with $\gamma>0$}
    \small
    \begin{tabular}{c|c|c|c|c}
         Divergence & CE & $\phi_\gamma(z)$ & Parameters & Special cases \\
         \hline\hline
         Density Power (DP)  \cite{beta-div}& \eqref{eq:dpce} &$z^{1+\gamma}$ & - & -\\
         Pseudo-Spherical (PS)  \cite{gamma-div, type0-div}& \eqref{eq:psce} & $z$ & - & -\\\hline
         \multirow{2}{*}{Bregman--H\"{o}lder (BH) \cite{holder-div1}}& \multirow{2}{*}{\eqref{eq:bhce}} &\multirow{2}{*}{$z^\kappa$} & \multirow{2}{*}{$\kappa\geq1$} &$\kappa=1+\gamma$: DP\\
         &&&&$\kappa=1$: PS \\\hline
         PS-type Bridge Density Power & \multirow{2}{*}{\eqref{eq:psbdpce}} &\multirow{2}{*}{$\frac{1}{\lambda_2}\left[\left(\lambda_1+\lambda_2 z^{1+\gamma}\right)^{\frac{1}{1+\gamma}}-\lambda_1^{\frac{1}{1+\gamma}}\right]$}&\multirow{2}{*}{$\lambda_1\geq0, \lambda_2\geq0$} &$\lambda_1=0, \lambda_2\neq0$: scaled PS\\
         (PSBDP)  \cite{bridge-div, Gayen2024}&&&&$\lambda_1\neq0,\lambda_2\to0$: scaled DP\\\hline
         \multirow{5}{*}{Combination of BH and PSBDP} & \multirow{5}{*}{\eqref{eq:comb_cross_ent}} & \multirow{5}{*}{$\frac{1}{\lambda_2}\left[\left(\lambda_1+\lambda_2 z^{1+\gamma}\right)^{\frac{\kappa}{1+\gamma}}-\lambda_1^{\frac{\kappa}{1+\gamma}}\right]$}&\multirow{5}{*}{$\lambda_1\geq0, \lambda_2\geq0,$}&$\lambda_1=0, \lambda_2\neq0$: scaled BH\\
         &&&&$\lambda_1\neq0, \lambda_2\to0$: scaled DP\\
         &&&&$\lambda_2\neq0, \kappa=1+\gamma$: DP\\
         &&&$\kappa\geq1$&$\lambda_1=0,\lambda_2\neq0, \kappa=1$: scaled PS\\
         &&&&$\lambda_1\neq0,\lambda_2\neq0,\kappa=1$: PSBDP \\\hline
         \multirow{2}{*}{Mixture of DP and PS} & \multirow{2}{*}{\eqref{eq:mix_cros_ent}} & \multirow{2}{*}{$tz^{1+\gamma}+(1-t)z$} & \multirow{2}{*}{$0\leq t\leq1$}&$t=1$: DP\\
         &&&&$t=0$: PS\\\hline
        \multicolumn{4}{l}{Symbol '$\to$' represents limit operation. Abbreviation: CE, cross-entropy.}
    \end{tabular}
    \label{tab:NBDPD}
    \vspace{-10.5pt}
\end{table*}
\renewcommand{\arraystretch}{1}

\section{Norm-based Bregman density power divergence}
Let us assume $\phi_\gamma:\mathbb{R}_+\to\mathbb{R}$ is a strictly increasing and (not strictly) convex function.
Then, for $\gamma>0$, $\phi_\gamma\bigl(\|q\|_{1+\gamma}\bigr)$ becomes a strictly convex functional with respect to the probability distribution $q$.
For any $t\in(0,1)$ and $p,q\in\mathcal{P}$ with $p\neq q$, the following inequality holds:
\begin{align*}
    &\phi_\gamma\bigl(\|tq+(1-t)p\|_{1+\gamma}\bigr)\overset{(a)}{<}\phi_\gamma\bigl(t\,\|q\|_{1+\gamma}+(1-t)\,\|p\|_{1+\gamma}\bigr)\\
    &\overset{(b)}{\leq}t\,\phi_\gamma\bigl(\|q\|_{1+\gamma}\bigr)+(1-t)\,\phi_\gamma\bigl(\|p\|_{1+\gamma}\bigr),
\end{align*}
where (a) follows from the strict convexity of the $L_{1+\gamma}$-norm with respect to $q$ and the monotonic increasing property of $\phi_\gamma$, while (b) follows from the (non-strict) convexity of $\phi_\gamma$.
Hence, we define the strictly convex functional $\Phi_\gamma(q)$ as follows:
\begin{align}
    \Phi_\gamma(q)=
    \begin{cases}
        \frac{\phi_\gamma\bigl(\|q\|_{1+\gamma}\bigr)-\phi_0(1)}{\gamma}, &(\gamma>0),\\
        \phi_{0}'(1)\bigl\langle q\log q\bigr\rangle + \left.\frac{\partial\phi_\gamma(1)}{\partial \gamma}\right|_{\gamma=0}, &(\gamma=0).
    \end{cases}\label{eq:ndbd_ent}
\end{align}
For $\gamma=0$, it is defined by taking the continuous limit.
Using the strictly convex functional \eqref{eq:ndbd_ent}, the corresponding Bregman cross-entropy \eqref{eq:bregman_cross_ent} and divergence \eqref{eq:bregman_div} are defined as follows.
\begin{definition}[Norm-based Bregman density power cross-entropy and divergence]
Let us assume $\phi_\gamma:\mathbb{R}_+\to\mathbb{R}$ is a strictly increasing and (not strictly) convex function, and $\gamma\geq0$. 
When $\gamma>0$, the norm-based Bregman density power cross-entropy (NB-DPCE) between two probability distributions is defined by
\begin{align}
    \begin{split}
            d_{\phi,\gamma}^{\rm NB}(q,p) = 
            &-\frac{\phi_\gamma\bigl(\|p\|_{1+\gamma}\bigr) -\phi_0(1)}{\gamma} \\
            &- \frac{\phi_\gamma'\bigl(\|p\|_{1+\gamma}\bigr)\bigl(\bigl\langle qp^\gamma\bigr\rangle - \bigl\langle p^{1+\gamma}\bigr\rangle\bigr)}{\gamma\,\|p\|_{1+\gamma}^\gamma}, \label{eq:new_ce}
    \end{split}
\end{align}
and when $\gamma=0$, it is defined by
\begin{align}
    d_{\phi,\gamma}^{\rm NB}(q,p) =-\phi_0'(1)\bigl\langle q\log p\bigr\rangle 
        -\left.\frac{\partial\phi_\gamma(1)}{\partial \gamma}\right|_{\gamma=0}, \label{eq:ce0}
\end{align}
where $\phi'_\gamma(z)$ is the derivative of $\phi_\gamma(z)$ with respect to $z$.
When $\gamma>0$, the NB-DPD is defined by
\begin{align}
    \begin{split}
         D_{\phi,\gamma}^{\rm NB}(q,p)=& \frac{\phi_\gamma\bigl(\|q\|_{1+\gamma}\bigr) - \phi_\gamma\bigl(\|p\|_{1+\gamma}\bigr)}{\gamma} \\
         &- \frac{\phi_\gamma'\bigl(\|p\|_{1+\gamma}\bigr)\bigl(\bigl\langle qp^\gamma\bigr\rangle-\bigl\langle p^{1+\gamma}\bigr\rangle\bigr)}{\gamma\,\|p\|_{1+\gamma}^\gamma}, \label{eq:new_div}
    \end{split}
\end{align}
and when $\gamma=0$, it is defined by
\begin{align}
    D_{\phi,\gamma}^{\rm NB}(q,p) = \phi_0'(1)\,\bigl\langle q\log\tfrac{q}{p}\bigr\rangle. \label{eq:kl0}
\end{align}
\end{definition}
We prove that when $\gamma=0$, the NB-DPCE \eqref{eq:ce0} and NB-DPD \eqref{eq:kl0} are defined as continuous limits.
\begin{assumption}\label{assumption:NB-DPD}
    \begin{enumerate}
        \item $\phi_\gamma(z)$ is differentiable with respect to $z$ and $\gamma$.
        \item $\displaystyle \lim_{\gamma\to0} \phi_\gamma\bigl(\|q\|_{1+\gamma}\bigr) = \phi_0\left(1\right)$.
        \item $\displaystyle \lim_{\gamma\to0} \phi'_\gamma\bigl(\|q\|_{1+\gamma}\bigr) = \phi'_0\left(1\right)$.
        \item $\displaystyle \left.\lim_{\gamma\to0}\frac{\partial \phi_\gamma(z)}{\partial \gamma}\right|_{z=\|q\|_{1+\gamma}}
            = \left.\frac{\partial \phi_\gamma\left(1\right)}{\partial \gamma}\right|_{\gamma=0}$.
        \item Suppose that $\phi_\gamma(z)$ has a parameter $\alpha$. If $\alpha$ depends on $\gamma$, then $\lim_{\gamma\to0}\alpha_\gamma = \alpha_0$ exists and it does not depends on $\gamma$. \label{cond:nb_dpd_param}
    \end{enumerate}
\end{assumption}
\begin{theorem}
Under Assumption \ref{assumption:NB-DPD}, when $\gamma\to0$, the NB-DPCE \eqref{eq:new_ce} converges to an affine transformation of the Shannon-cross-entropy \eqref{eq:ce0}, and the NB-DPD \eqref{eq:new_div} converges to a constant multiple of the KL-divergence \eqref{eq:kl0}.
\end{theorem}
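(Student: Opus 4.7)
The plan is to evaluate both \eqref{eq:new_ce} and \eqref{eq:new_div} as $\gamma\to 0$ by recognising each piece as a $0/0$ indeterminate form and applying L'Hospital's rule (equivalently, a first-order Taylor expansion in $\gamma$ around $\gamma=0$). Assumption \ref{assumption:NB-DPD} is exactly what I need to legitimise differentiating under the integral sign, to identify $\lim_{\gamma\to0}\phi_\gamma(\|p\|_{1+\gamma})=\phi_0(1)$, and to interchange the parameter limit with differentiation in $\gamma$.

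The computational core is one auxiliary lemma:
\begin{align*}
\left.\frac{d}{d\gamma}\|p\|_{1+\gamma}\right|_{\gamma=0}=\langle p\log p\rangle.
\end{align*}
This follows by logarithmic differentiation of $\|p\|_{1+\gamma}=\bigl\langle p^{1+\gamma}\bigr\rangle^{1/(1+\gamma)}$, using $\langle p\rangle=1$ so that $\log\langle p\rangle=0$ at $\gamma=0$. Combining this with the chain rule and Assumption \ref{assumption:NB-DPD}(1),(3),(4) yields
\begin{align*}
\lim_{\gamma\to 0}\frac{\phi_\gamma(\|p\|_{1+\gamma})-\phi_0(1)}{\gamma}
=\left.\frac{\partial\phi_\gamma(1)}{\partial\gamma}\right|_{\gamma=0}+\phi_0'(1)\,\langle p\log p\rangle,
\end{align*}
and an analogous identity with $q$ in place of $p$.

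For the second fraction in \eqref{eq:new_ce}, I would write the numerator as $\bigl\langle(q-p)p^\gamma\bigr\rangle$, which vanishes at $\gamma=0$ since $\langle q-p\rangle=0$. L'Hospital then gives
\begin{align*}
\lim_{\gamma\to0}\frac{\bigl\langle qp^\gamma\bigr\rangle-\bigl\langle p^{1+\gamma}\bigr\rangle}{\gamma}=\bigl\langle(q-p)\log p\bigr\rangle,
\end{align*}
while the prefactor $\phi_\gamma'(\|p\|_{1+\gamma})/\|p\|_{1+\gamma}^\gamma$ tends to $\phi_0'(1)$ by Assumption \ref{assumption:NB-DPD}(3). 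Adding the two pieces, the $\partial_\gamma\phi_\gamma(1)|_{\gamma=0}$ and $\phi_0'(1)\langle p\log p\rangle$ terms cancel appropriately, leaving exactly $-\phi_0'(1)\langle q\log p\rangle-\partial_\gamma\phi_\gamma(1)|_{\gamma=0}$, which is \eqref{eq:ce0}. Running the same argument on \eqref{eq:new_div}, the $\partial_\gamma\phi_\gamma(1)|_{\gamma=0}$ constants cancel between the $q$- and $p$-contributions of the first fraction, and the remaining terms collapse into $\phi_0'(1)\langle q\log(q/p)\rangle$, which is \eqref{eq:kl0}.

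The main obstacle, and the reason for stating Assumption \ref{assumption:NB-DPD} explicitly, is ensuring that the two limits the argument relies on, namely $\lim_{\gamma\to0}\phi_\gamma'(\|p\|_{1+\gamma})=\phi_0'(1)$ and $\lim_{\gamma\to0}\partial_\gamma\phi_\gamma(z)|_{z=\|p\|_{1+\gamma}}=\partial_\gamma\phi_\gamma(1)|_{\gamma=0}$, genuinely hold; without condition (5) on the parameter $\alpha$, one could construct families $\phi_\gamma$ whose $\gamma$-dependence in the parameter spoils these interchanges. Once those conditions are in force, the proof is essentially a bookkeeping exercise: expand each $0/0$ expression to first order in $\gamma$, invoke the auxiliary derivative identity for $\|p\|_{1+\gamma}$, and collect terms.
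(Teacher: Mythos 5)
Your proposal is correct and follows essentially the same route as the paper: L'H\^{o}pital's rule (first-order expansion in $\gamma$) applied to each $0/0$ piece, the chain-rule decomposition of $\partial_\gamma\phi_\gamma(\|p\|_{1+\gamma})$ into the $z$-derivative times $\partial_\gamma\|p\|_{1+\gamma}$ plus the explicit $\gamma$-derivative, and cancellation of the $\phi_0'(1)\langle p\log p\rangle$ terms; your explicit auxiliary identity $\frac{d}{d\gamma}\|p\|_{1+\gamma}\big|_{\gamma=0}=\langle p\log p\rangle$ is exactly what the paper uses implicitly. The only cosmetic difference is that you evaluate the limit of \eqref{eq:new_div} directly, whereas the paper obtains it from $D_{\phi,\gamma}^{\rm NB}(q,p)=d_{\phi,\gamma}^{\rm NB}(q,p)-d_{\phi,\gamma}^{\rm NB}(q,q)$; both yield $\phi_0'(1)\bigl\langle q\log\tfrac{q}{p}\bigr\rangle$.
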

\begin{proof}
We calculate the limit of the first term of the NB-DPCE \eqref{eq:new_ce}.
From l’H\^{o}pital’s rule, we have
\begin{align*}
    &\lim_{\gamma\to0}-\frac{\phi_\gamma\bigl(\|p\|_{1+\gamma}\bigr) -\phi_0(1)}{\gamma}
    = -\lim_{\gamma\to0}\frac{\partial \phi_\gamma\left(\|p\|_{1+\gamma}\right)}{\partial \gamma}\\
    =& -\lim_{\gamma\to0}\frac{\partial \phi_\gamma\left(\|p\|_{1+\gamma}\right)}{\partial \|p\|_{1+\gamma}}\frac{\partial \|p\|_{1+\gamma}}{\partial \gamma} - \lim_{\gamma\to0}\left.\frac{\partial \phi_\gamma(z)}{\partial \gamma}\right|_{z=\|p\|_{1+\gamma}}\\
    =&-\phi'_0\left(1\right)\langle p\log p\rangle - \left.\frac{\partial\phi_\gamma(1)}{\partial \gamma}\right|_{\gamma=0}.
\end{align*}
Next, we calculate the limit of the second term of NB-DPCE \eqref{eq:new_ce}.
We have
\begin{align*}
    \lim_{\gamma\to0} -\frac{\phi'_\gamma\left(\|p\|_{1+\gamma}\right)\left(\left\langle qp^\gamma\right\rangle - \left\langle p^{1+\gamma}\right\rangle\right)}{\gamma\|p\|_{1+\gamma}^\gamma}\\
    =-\phi'_0(1)\left(\langle q\log p\rangle - \langle p\log p\rangle\right).
\end{align*}
Thus, we have
\begin{align*}
    \lim_{\gamma\to0} d_{\gamma, \phi}^{\rm NB}(q,p)=-\phi_0'(1)\bigl\langle q\log p\bigr\rangle 
        -\left.\frac{\partial\phi_\gamma(1)}{\partial \gamma}\right|_{\gamma=0}.
\end{align*}
From Definition \ref{def:bregman}, the NB-DPD \eqref{eq:new_div} is defined as the difference between cross-entropies, i.e., $D_{\phi, \gamma}^{\rm NB}(q,p)=d_{\phi, \gamma}^{\rm NB}(q,p)-d_{\phi, \gamma}^{\rm NB}(q,q)$.
Thus, when $\gamma\to0$, it holds that
\begin{align*}
    \lim_{\gamma\to0}D_{\phi, \gamma}^{\rm NB}(q,p) &= \lim_{\gamma\to0}d_{\phi, \gamma}^{\rm NP}(q,p) - \lim_{\gamma\to0}d_{\phi, \gamma}^{\rm NP}(q,q)\\
    &=\phi'_0(1)\left\langle q\log\frac{q}{p}\right\rangle.
\end{align*}
\qed
\end{proof}

\subsection{Existing divergences}
Table \ref{tab:NBDPD} summarizes the existing divergences and their unified forms, defined by the strictly increasing and convex function $\phi_\gamma$.
Here, we explain how existing divergences are special cases of the NB-DPD \eqref{eq:new_div}.
For simplicity, we show the correspondence using cross-entropy.
\begin{example}
    When we set $\phi_\gamma(z)=z^{1+\gamma}$, NB-DPCE \eqref{eq:new_ce} reduces to DPCE \eqref{eq:dpce} \cite{beta-div}.
\end{example}
\begin{example}
    When we set $\phi_\gamma(z)=z$, NB-DPCE \eqref{eq:new_ce} reduces to PSCE \eqref{eq:psce} \cite{gamma-div, type0-div}.
\end{example}
\begin{example}
When we set $\phi_\gamma(z)=z^\kappa, (\kappa\geq1)$, NB-DPCE \eqref{eq:new_ce} reduces to the Bregman-H\"{o}lder cross-entropy (BHCE) \cite{holder-div1} as follows:
\begin{align}
    d_{\kappa, \gamma}^{\rm BH}(q,p)=\frac{\kappa}{\gamma} \|p\|_{1+\gamma}^\kappa\left(1-\frac{1}{\kappa}-\frac{\langle qp^\gamma \rangle}{\langle p^{1+\gamma}\rangle}\right)+\frac{1}{\gamma}. \label{eq:bhce}
\end{align}
When $\kappa=1+\gamma$, it reduces to DPCE \eqref{eq:dpce}, and when $\kappa=1$, it reduces to PSCE \eqref{eq:psce}.
In this way, $\kappa$ actually allowed to depend on $\gamma$; thus, so when considering the limit $\gamma\to 0$, we must assume $\lim_{\gamma\to0}\kappa_\gamma=\kappa_0$ (see condition \ref{cond:nb_dpd_param} of Assumption \ref{assumption:NB-DPD}).
\end{example}
\begin{example}
When we set $\phi_\gamma(z)=\frac{1}{\lambda_2}\Bigl[\bigl(\lambda_1+\lambda_2 z^{1+\gamma}\bigr)^{\frac{1}{1+\gamma}}-\lambda_1^{\frac{1}{1+\gamma}}\Bigr], (\bm{\lambda}=(\lambda_1, \lambda_2)\in\mathbb{R}_+^2)$, 
NB-DPCE \eqref{eq:new_ce} reduces to the pseudo-spherical (PS) type bridge density power cross-entropy (BDPCE) as follows:
\footnotesize
\begin{align}
    d_{\bm{\lambda}, \gamma}^{\rm BDP}(q,p) = \frac{-1}{\lambda_2\gamma}
    \biggl[\frac{\lambda_1+\lambda_2 \langle qp^\gamma \rangle}{
        \bigl(\lambda_1+\lambda_2 \langle p^{1+\gamma}\rangle\bigr)^\frac{\gamma}{1+\gamma}}
    -\bigl(\lambda_1+\lambda_2\bigr)^{\frac{1}{1+\gamma}}\biggr]. \label{eq:psbdpce}
\end{align}
\normalsize
When $\lambda_1=0, \lambda_2\neq0$, BDPCE \eqref{eq:psbdpce} reduces to a constant multiple of the PSCE \eqref{eq:psce}, as follows:
\begin{align*}
    d_{\bm{\lambda},\gamma}^{\rm BDP}(q,p)
    =\lambda_2^{-\frac{\gamma}{1+\gamma}}\,d_\gamma^{\rm PS}(q,p).
\end{align*}
When $\lambda_1\neq0$ and $\lambda_2\to0$, BDPCE \eqref{eq:psbdpce} reduces to a constant multiple of the DPCE \eqref{eq:dpce}, as follows:
\begin{align*}
    \lim_{\lambda_2\to0}d_{\bm{\lambda},\gamma}^{\rm BDP}(q,p)
    =\frac{\lambda_1^{-\frac{\gamma}{1+\gamma}}}{1+\gamma}\,d_{\gamma}^{\rm DP}(q,p).
\end{align*}
\end{example}
\begin{remark}
The BDPCE discussed in previous studies \cite{bridge-div, Gayen2024} is the following log-type cross-entropy:
\footnotesize
\begin{align}
    \begin{split}
        \tilde{d}_{\bm{\lambda}, \gamma}^{\rm BDP}(q,p)=&-\frac{1+\gamma}{\lambda_2\gamma}\log\Bigl(\lambda_1+\lambda_2\langle qp^{\gamma}\rangle\Bigr)\\
        &+ \frac{1}{\lambda_2}\log \Bigl(\lambda_1+\lambda_2\langle p^{1+\gamma}\rangle\Bigr) -\frac{1}{\lambda_2\gamma}\log(\lambda_1+\lambda_2). \label{eq:log_bdpce}
    \end{split}
\end{align}
\normalsize
The relationship between PS-type BDPCE \eqref{eq:psbdpce} and log-type BDPCE \eqref{eq:log_bdpce} is given by $\tilde{d}_{\bm{\lambda}, \gamma}^{\rm BDP}(q,p)=\xi\left(d_{\bm{\lambda}, \gamma}^{\rm BDP}(q,p)\right)$, where
\footnotesize
\begin{align*}
    \xi(z)=-\frac{1+\gamma}{\lambda_2\gamma}\log\left(-\lambda_2\gamma z + \left(\lambda_1+\lambda_2\right)^{\frac{1}{1+\gamma}}\right)-\frac{1}{\lambda_2\gamma}\log\left(\lambda_1+\lambda_2\right).
\end{align*}
\normalsize
Previous studies \cite{bridge-div, Gayen2024} have not shown that the BDPD is a special case of the Bregman divergence in the sense of Definition \ref{def:eqv}.
Previous studies \cite{bridge-div}\footnote{In \cite{bridge-div}, $\bm{\lambda}$ is constrained by $\lambda_1+\lambda_2=1$.}, \cite{Gayen2024} have also pointed out that log-type BDPD can be reduced to log-$\gamma$-divergence \cite{gamma-div} and DPD \cite{beta-div}, but only when ($\lambda_1=0, \lambda_2=1$) and ($\lambda_1=1, \lambda_2\to0$).
In practice, even in the case of the log-type BDPD, it reduces to the log-$\gamma$-divergence \cite{gamma-div} when ($\lambda_1=0, \lambda_2\neq0$) and to the DPD \cite{beta-div} when ($\lambda_1\neq0, \lambda_2\to0$) (see Appendix \ref{sec:fdpd}, Example \ref{example:bdpd}). 
\end{remark}

\subsection{Unified divergences}\label{sec:new_div}
BHD \cite{holder-div1} and BDPD \cite{bridge-div, Gayen2024} are generalizations of the DPD \cite{beta-div} and $\gamma$-divergence \cite{gamma-div, type0-div} respectively, but their relationship has not been clearly elucidated. 
In the previous subsection, we showed that these divergences can be characterized as special cases of the NB-DPD \eqref{eq:new_div}. 
Furthermore, because the NB-DPD is characterized by a strictly increasing and convex function $\phi_\gamma$, it is possible to construct a general divergence based on combinations of $\phi_\gamma$ corresponding to these divergences.
By setting the strictly increasing convex function $\phi_\gamma$ to $\phi_\gamma(z)=\frac{1}{\lambda_2}\Bigl[\bigl(\lambda_1+\lambda_2\,z^{1+\gamma}\bigr)^{\frac{\kappa}{1+\gamma}} - \lambda_1^{\frac{\kappa}{1+\gamma}}\Bigr], (\bm{\lambda}\in\mathbb{R}_+^2, \kappa\geq1)$, we obtain the general form that includes both the BHD and BDPD.
Then, the corresponding cross-entropy is given by
\footnotesize
\begin{align}
    \begin{split}
        d_{\bm{\lambda}, \kappa, \gamma}(q,p) 
        &= \frac{\kappa}{\lambda_2 \gamma}\Bigl(\lambda_1+\lambda_2\left\langle p^{1+\gamma}\right\rangle\Bigr)^{\frac{\kappa}{1+\gamma}}\\
        &\cdot\left[1-\frac{1}{\kappa} 
            -\frac{\lambda_1+\lambda_2\bigl\langle qp^\gamma\bigr\rangle}{\lambda_1+\lambda_2\left\langle p^{1+\gamma}\right\rangle}\right]
            +\frac{1}{\lambda_2\gamma}\left(\lambda_1+\lambda_2\right)^{\frac{\kappa}{1+\gamma}}. \quad \label{eq:comb_cross_ent}
    \end{split}
\end{align}
\normalsize
When $\lambda_1=0, \lambda_2\neq0$, this cross-entropy reduces to $\lambda_2^{\frac{\kappa}{1+\gamma}-1}$ times BHCE \eqref{eq:bhce}.
When $\lambda_1\neq0, \lambda_2\to0$, it reduces to $\kappa\lambda_1^{\frac{\kappa}{1+\gamma}-1}/(1+\gamma)$ times DPCE \eqref{eq:dpce}.
When $\lambda_1\neq0, \lambda_2\neq0, \kappa=1$, it reduces to PS-type BDPCE \eqref{eq:psbdpce}.
For details, see Appendix \ref{sec:example_nb-dpd}.

Moreover, because a convex combination of strictly increasing convex functions is itself a strictly increasing convex function, one can generate general divergences by taking such convex combinations.
For instance, by taking the convex combination of the $\phi_\gamma$ corresponding to the DPD \eqref{eq:dp} and the PSD \eqref{eq:psd}, namely, $\phi_\gamma(z)=tz^{1+\gamma} + (1-t)z, \;(0\leq t\leq 1)$, we obtain a cross-entropy that connects the two:
\begin{align}
    d_{t, \gamma}(q,p)=t\,d_{\gamma}^{\rm DP}(q,p) + (1-t)\,d_{\gamma}^{\rm PS}(q,p). \label{eq:mix_cros_ent}
\end{align}
This essentially yields a convex combination of the DPD \eqref{eq:dp} and PSD \eqref{eq:psd}.
\section{Relationship with other divergences}
In the previous section, we showed that one can generate a generalized divergence by taking the convex combination of $\phi_\gamma$ corresponding to the DPD \eqref{eq:dp} and PSD \eqref{eq:psd} within the NB-DPD framework.
Similarly, one can also generate divergences from the HD \cite{holder-div1} and FDPD \cite{FDPD}, which have functional degrees of freedom, by taking the convex combination of their corresponding functions.
In the NB-DPD \eqref{eq:new_div}, HD \cite{holder-div1}, and FDPD \cite{FDPD} alike, the $\gamma$-divergence is included as a special case.
However, the generated $\gamma$-divergence is only the same in the sense that the cross-entropy is the same according to Definition \ref{def:eqv}; the actual formulas differ.
Therefore, when one generates a divergence by taking the convex combination of the functions corresponding to the DPD and $\gamma$-divergence in each of these divergences, the resulting divergence is different in the sense of Definition \ref{def:eqv}.
The divergences derived from the FDPD \cite{FDPD} and HD \cite{holder-div1} cannot be reduced to M-estimation \eqref{eq:m_estim_ee}, whereas divergences generated from the NB-DPD \eqref{eq:new_div} can be reduced to M-estimation \eqref{eq:m_estim_ee}.
This is a key distinction.

Kanamori and Fujisawa \cite[Th. 3.2]{holder-div1} have shown that the intersection of the Bregman divergence and HD is the BHD.
Because the NB-DPD is a special case of the Bregman divergence, the following corollary holds:
\begin{corollary}
    The intersection between the NB-DPD and HD is limited to the BHD.
\end{corollary}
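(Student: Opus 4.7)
The plan is a short two-way inclusion argument that leverages the Kanamori--Fujisawa result already cited in the excerpt. The key observation is that, by its very construction via \eqref{eq:ndbd_ent} and Definition \ref{def:bregman}, the NB-DPD is a subclass of the Bregman divergence: the functional $\Phi_\gamma(q) = \bigl(\phi_\gamma(\|q\|_{1+\gamma}) - \phi_0(1)\bigr)/\gamma$ is strictly convex (as verified right before \eqref{eq:ndbd_ent}), so \eqref{eq:new_div} is exactly the Bregman divergence associated to $\Phi_\gamma$.

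First I would establish the ``$\subseteq$'' direction. Let $D$ be any divergence that is simultaneously an NB-DPD and an HD. Since every NB-DPD is a Bregman divergence, $D$ lies in $\text{Bregman} \cap \text{HD}$. By \cite[Th. 3.2]{holder-div1}, this intersection equals the class of BHD, so $D$ must be a BHD.

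Next I would establish the reverse inclusion ``$\supseteq$'', i.e., that every BHD is itself both an NB-DPD and an HD. The latter is immediate since the BHD was defined in \cite{holder-div1} as the Bregman--H\"{o}lder intersection and is therefore an HD. For the NB-DPD membership, I would point to the explicit representation already given earlier in the excerpt: setting $\phi_\gamma(z)=z^\kappa$ with $\kappa \geq 1$ in \eqref{eq:new_ce} yields the BHCE \eqref{eq:bhce}, hence taking the Bregman difference produces the BHD. Since $\phi_\gamma(z)=z^\kappa$ is strictly increasing and convex on $\mathbb{R}_+$ for $\kappa\geq 1$, it satisfies the hypotheses required by the NB-DPD definition.

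Combining the two inclusions gives the corollary. I do not anticipate a genuine obstacle: the work has essentially been done upstream, and the corollary is a bookkeeping consequence of (i) the constructive fact that NB-DPD $\subseteq$ Bregman, (ii) the worked example showing BHD $\subseteq$ NB-DPD via $\phi_\gamma(z)=z^\kappa$, and (iii) the cited Kanamori--Fujisawa intersection theorem. The only point requiring mild care is making sure the equivalence of cross-entropies (Definition \ref{def:eqv}) is handled consistently on both sides of the intersection, so that a divergence identified as a BHD under one normalization is not inadvertently excluded under another; this is handled by noting that both the NB-DPD and HD frameworks are closed under the strictly increasing transformations defining equivalence.
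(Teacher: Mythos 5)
Your proposal is correct and follows essentially the same route as the paper: the forward inclusion is exactly the paper's stated justification (NB-DPD $\subseteq$ Bregman combined with \cite[Th.~3.2]{holder-div1}), and the reverse inclusion is supplied by the worked example $\phi_\gamma(z)=z^\kappa$ already given in the text. Your explicit treatment of both inclusions and of the equivalence-class subtlety is slightly more careful than the paper's one-line argument, but it is not a different proof.
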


In addition, the following statement can be made about the NB-DPD and FDPD.
\begin{theorem}\label{thm:NB_FDPD_intersec}
    The intersection between the NB-DPD and FDPD is limited to the BDPD.
\end{theorem}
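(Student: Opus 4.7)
The plan is to exploit the fact that the NB-DPCE \eqref{eq:new_ce} is affine in the cross-term $\langle qp^\gamma\rangle$, while an FDPCE (see Appendix~\ref{sec:fdpd}) is additively separable in $\langle qp^\gamma\rangle$ and $\langle p^{1+\gamma}\rangle$, and to derive a functional equation on $\phi_\gamma$ whose solutions are exactly the PS-type BDPCE family \eqref{eq:psbdpce}.

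First I would rewrite \eqref{eq:new_ce} in the affine form
\[
d_{\phi,\gamma}^{\rm NB}(q,p) = A_\gamma(P) + B_\gamma(P)\,\langle qp^\gamma\rangle,\quad P=\|p\|_{1+\gamma},
\]
with $B_\gamma(P) = -\phi_\gamma'(P)/(\gamma P^\gamma)$ and $A_\gamma(P) = -[\phi_\gamma(P)-\phi_0(1)]/\gamma + \phi_\gamma'(P)P/\gamma$. Writing the FDPCE in its generic separable form $d^{\rm FDPD}(q,p) = F(\langle qp^\gamma\rangle) + G(\langle p^{1+\gamma}\rangle)$ and supposing it is equivalent to the NB-DPCE via a strictly increasing $\xi$ (Definition~\ref{def:eqv}), so that $\xi(d_{\phi,\gamma}^{\rm NB}) = d^{\rm FDPD}$, I would differentiate twice with respect to $X=\langle qp^\gamma\rangle$ at fixed $p$ and set $u = A_\gamma(P)+B_\gamma(P)X$. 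This produces $\xi'(u)B_\gamma(P) = F'(X)$ and $\xi''(u)B_\gamma(P)^2 = F''(X)$, and eliminating $B_\gamma(P)$ gives
\[
\frac{\xi''(u)}{\xi'(u)^2} \;=\; \frac{F''(X)}{F'(X)^2}.
\]
The left-hand side depends only on $u$ and the right-hand side only on $X$; as $(u,X)$ sweeps a two-dimensional region, both sides must equal the same constant $-a$. Integrating shows $\xi$ is affine (when $a=0$) or logarithmic (when $a\neq 0$), and $F$ correspondingly affine or logarithmic.

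Substituting these shapes back into $\xi'(u)B_\gamma(P) = F'(X)$ and matching coefficients of $X$ and of the constant term forces a linear relation $A_\gamma(P) = \alpha B_\gamma(P)+\beta$ with constants $\alpha,\beta$. Plugging in the explicit expressions for $A_\gamma, B_\gamma$ and switching to the variable $P^{1+\gamma}$ reduces this to a separable first-order ODE for $\phi_\gamma$, whose general solution is
\[
\phi_\gamma(z) = \tfrac{1}{\lambda_2}\bigl[(\lambda_1+\lambda_2 z^{1+\gamma})^{1/(1+\gamma)}-\lambda_1^{1/(1+\gamma)}\bigr],\quad \lambda_1,\lambda_2\geq 0,
\]
which is precisely the $\phi_\gamma$ identified with the PS-type BDPCE in \eqref{eq:psbdpce}. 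The converse inclusion---that the BDPD is simultaneously an NB-DPD through this $\phi_\gamma$ and an FDPD through the log-type BDPCE \eqref{eq:log_bdpce}---is already recorded in the preceding examples, closing the argument. The main technical obstacle is justifying the step that forces $\xi''/\xi'^2$ to be constant: one must verify that the Jacobian $A_\gamma'(P)+B_\gamma'(P)X$ is not identically zero so that $(u,X)$ genuinely spans a two-dimensional set; strict monotonicity and convexity of $\phi_\gamma$ eliminate the only truly degenerate possibilities, and the remaining boundary sub-cases (such as $\phi_\gamma(z)\propto z^{1+\gamma}$) are themselves already contained in the BDPD family as the DPD limit $\lambda_2\to 0$.
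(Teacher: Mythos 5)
Your argument is essentially correct, but it takes a genuinely different route from the paper's. The paper never touches the functional equation you set up: it observes that the NB-DPD, being a Bregman divergence, is always reducible to M-estimation, so any member of the intersection must be an FDPD whose estimating equation has the linearly additive form \eqref{eq:m_estim_ee}; inspecting that equation, the factor $v'\bigl(\tfrac{1}{n}\sum_i p_\theta(x_i)^\gamma\bigr)$ can only be accommodated when $v'$ is constant (giving the DPD) or $v'(z)=1/(\lambda_1+\lambda_2 z)$ (giving the BDPD), and the converse inclusion is delegated to the worked examples. You instead compare the two cross-entropies directly, exploiting that the NB-DPCE is affine in $X=\langle qp^\gamma\rangle$ while the FDPCE is additively separable in $X$ and $\langle p^{1+\gamma}\rangle$; separation of variables forces $\xi''/\xi'^2=F''/F'^2$ to be a constant $-a$, so $\xi$ and $v$ are affine or logarithmic, and the coefficient matching yields a first-order linear ODE whose solution is exactly the PS-type BDPCE function $\phi_\gamma$. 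What your route buys is that it simultaneously pins down $v$, the connecting map $\xi$, and the explicit $\phi_\gamma$, so both inclusions fall out of one computation; what it costs is extra regularity (twice-differentiable $\xi$ and $v$) and, more importantly, a richness-of-domain argument: you must check not only that the Jacobian $A_\gamma'(P)+B_\gamma'(P)X$ is not identically zero (which follows since $B_\gamma'\equiv 0$ forces $\phi_\gamma\propto z^{1+\gamma}$ and then $A_\gamma'\neq 0$), but also that the achievable pairs $\bigl(\|p\|_{1+\gamma},\langle qp^\gamma\rangle\bigr)$ fill an open set and that the resulting $u$-intervals overlap as $X$ varies, so that the constant $-a$ is globally the same. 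Two small bookkeeping points: in the $a=0$ branch the matching gives $B_\gamma$ constant rather than $A_\gamma=\alpha B_\gamma+\beta$, which is precisely the DPD case that you (like the paper) fold into the BDPD family as the $\lambda_2\to 0$ limit; and the sign constraints $\lambda_1,\lambda_2\geq 0$ should be extracted from requiring that $\phi_\gamma$ remain strictly increasing and convex and that $w(x)=v(e^x)$ remain increasing and convex, which you assert but do not derive.
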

The proof of Theorem \ref{thm:NB_FDPD_intersec} is provided in Appendix \ref{sec:intersec_fdpd_nb-dpd}.
\section{Robustness}
In robust estimation problems, we assume that the given data are generated from the probability distribution: $(1-\varepsilon)p_\theta(x)+\varepsilon c(x), \;(0\leq\varepsilon<1)$.
Here, $p_\theta(x)$ is the assumed statistical model, and $c(x)$ is the contamination distribution. 
In this case, the objective is to estimate the model parameter $\theta$ while avoiding the adverse effects of outliers generated from the contamination distribution $c(x)$.
As a measure of robustness, the influence function defined by the following expression is well known \cite{robust, huberbook}:
\begin{align*}
    {\rm IF}(x_o, \hat{\theta}, p^*) =\lim_{\varepsilon\to0}\frac{\hat{\theta}\bigl((1-\varepsilon)p^*+\varepsilon c(x_o)\bigr) - \hat{\theta}\bigl(p^*\bigr)}{\varepsilon},
\end{align*}
where $p^*$ is the true distribution.
The influence function represents a perturbation caused by the outlier $x_o$. 
If it diverges to infinity, the estimator collapses. 
In other words, if the influence function is finite, the estimator can be considered robust.
Generally, the influence function of an M-estimator is proportional to $\psi$ in estimating equation \eqref{eq:m_estim_ee}.
For $\gamma>0$, $\psi$ of NB-DPD \eqref{eq:new_div} is bounded, which implies robustness when $\gamma>0$. 
The redescending property is a desirable feature that enables the influence of large outliers to be ignored and is defined through the influence function.
By examining the $\psi$ corresponding to the NB-DPD, it is possible to identify divergences that satisfy the redescending property.
\begin{theorem}\label{thm:redscending}
    Suppose that $\forall\theta\in\Theta, \lim_{x_o\to\infty}p_\theta(x_o)^{1+\gamma}=0$ and $\lim_{x_o\to\infty}p_\theta(x_o)^{1+\gamma}s_\theta(x_o)=0$, where $s_\theta(x)=\partial\log p_\theta(x)/\partial\theta$.
    Then, for the NB-DPD, the redescending property $\lim_{x_o\to\infty} {\rm IF}(x_o, \hat{\theta}, p^*)=0$ holds only for the $\gamma$-divergence.
\end{theorem}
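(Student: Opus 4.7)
The plan is to derive the M-estimator's $\psi$ function for the NB-DPD explicitly, take the outlier limit $x_o\to\infty$, and read off the condition on $\phi_\gamma$ that the redescending property imposes. First, I would differentiate the NB-DPCE \eqref{eq:new_ce} with respect to $\theta$. Writing $N=\|p_\theta\|_{1+\gamma}$ and $s_\theta(x)=\partial\log p_\theta(x)/\partial\theta$, the identities $\partial_\theta N = \langle p_\theta^{1+\gamma}s_\theta\rangle/N^\gamma$, $\partial_\theta\langle qp_\theta^\gamma\rangle = \gamma\langle qp_\theta^\gamma s_\theta\rangle$, and $\partial_\theta[\phi_\gamma'(N)/N^\gamma] = (\partial_\theta N)[N\phi_\gamma''(N)-\gamma\phi_\gamma'(N)]/N^{1+\gamma}$ yield, after grouping the $\langle p_\theta^{1+\gamma}s_\theta\rangle$ terms and setting $q=\delta_{x}$ to read off the contribution of a single observation to the estimating equation $n^{-1}\sum_i\psi(x_i,\theta)=0$,
\begin{align*}
\psi(x,\theta) &= \frac{\phi_\gamma'(N)}{N^\gamma}\bigl[\langle p_\theta^{1+\gamma}s_\theta\rangle-p_\theta(x)^\gamma s_\theta(x)\bigr] \\
&\quad -\frac{\langle p_\theta^{1+\gamma}s_\theta\rangle\bigl[N\phi_\gamma''(N)-\gamma\phi_\gamma'(N)\bigr]}{\gamma N^{1+2\gamma}}\bigl[p_\theta(x)^\gamma-N^{1+\gamma}\bigr].
\end{align*}

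Next, I would take $x_o\to\infty$ under the stated tail hypotheses. The assumption $p_\theta(x_o)^{1+\gamma}\to 0$ forces $p_\theta(x_o)^\gamma\to 0$, and together with the natural regularity underlying $p_\theta(x_o)^{1+\gamma}s_\theta(x_o)\to 0$ also $p_\theta(x_o)^\gamma s_\theta(x_o)\to 0$. Both observation-dependent pieces therefore vanish, the two $\phi_\gamma'(N)$ contributions cancel exactly, and the limit collapses to
\begin{align*}
\lim_{x_o\to\infty}\psi(x_o,\theta) = \frac{N^{1-\gamma}\phi_\gamma''(N)}{\gamma}\,\langle p_\theta^{1+\gamma}s_\theta\rangle.
\end{align*}

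Finally, because the influence function of an M-estimator is proportional to $\psi(x_o,\theta^*)$, demanding $\lim_{x_o\to\infty}{\rm IF}(x_o,\hat\theta,p^*)=0$ forces the right-hand side to vanish at the true parameter. The factor $\langle p_\theta^{1+\gamma}s_\theta\rangle = (1+\gamma)^{-1}\partial_\theta N^{1+\gamma}$ is generically nonzero, and $N$ traces an interval of values as $\theta$ varies over a non-degenerate model, so one must have $\phi_\gamma''\equiv 0$. Hence $\phi_\gamma(z)$ is affine in $z$, and substituting such a $\phi_\gamma$ into \eqref{eq:new_ce} recovers the PSCE \eqref{eq:psce} up to an additive constant, which by Definition \ref{def:eqv} is the $\gamma$-divergence. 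The main obstacle is the careful bookkeeping in deriving $\psi$: one must verify that in the large-$x_o$ limit all terms not carrying $\phi_\gamma''(N)$ cancel exactly, so that the redescending condition reduces to the single requirement $\phi_\gamma''\equiv 0$; once this isolation is done the identification of the PSCE, and hence of the $\gamma$-divergence, is immediate.
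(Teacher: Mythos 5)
Your proposal is correct and follows essentially the same route as the paper: differentiate the NB-DPCE loss under the empirical distribution to obtain $\psi$, take $x_o\to\infty$ so that all terms not carrying $\phi_\gamma''$ cancel, and conclude that redescending forces $\phi_\gamma''\equiv 0$, i.e.\ the pseudo-spherical case; your limiting expression is exactly the paper's up to the positive factor $\gamma\|p_\theta\|_{1+\gamma}^{1+2\gamma}$ that the paper absorbs into $\psi$. The only (shared) loose end is that $\psi$ involves $p_\theta(x_o)^\gamma s_\theta(x_o)$ while the hypothesis controls $p_\theta(x_o)^{1+\gamma}s_\theta(x_o)$ --- the paper glosses over this exactly as you do.
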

The proof of Theorem \ref{thm:redscending} is provided in Appendix \ref{sec:redscending}.
\section{Conclusion and future work}
In this paper, we constructed a strictly convex functional from the $L_{1+\gamma}$-norm and a strictly increasing convex function $\phi_\gamma$, and defined a Bregman divergence.
We showed that by specifying $\phi_\gamma$, this divergence reduces to the DPD and PSD, as well as their generalizations, namely, the BHD and BDPD. 
Furthermore, by combining the functions $\phi_\gamma$ corresponding to these divergences, one can construct more general divergences. 
We compared the NB-DPD established in this paper with other divergences that have functional flexibility and characterized their properties. 
Finally, we assessed the influence function, a measure of robustness, and showed that the redescending property holds only for the $\gamma$-divergence.

The NB-DPD generates the maximum likelihood estimator when $\gamma=0$, regardless of $\phi_\gamma$, and produces robust estimators with bounded influence function when $\gamma>0$.
Therefore, a future research direction is to explore a $\phi_\gamma$ that offers a better trade-off between robustness and efficiency under the model.
In practical estimation problems, we expect that methods for tuning parameter selection, such as those proposed in \cite{select_tuning, Warwick2005}, could be applied to choose $\phi_\gamma$ appropriately.

\section*{Acknowledgment}
This paper was translated from Japanese into English by OpenAI ChatGPT o1 and 4o.
We would like to thank Editage for English language editing.
This work was supported by JSPS KAKENHI Grant numbers JP23K16849.

\enlargethispage{-1.2cm} 

\bibliographystyle{IEEEtran}
\bibliography{Refj}

\begin{thebibliography}{10}
\providecommand{\url}[1]{#1}
\csname url@samestyle\endcsname
\providecommand{\newblock}{\relax}
\providecommand{\bibinfo}[2]{#2}
\providecommand{\BIBentrySTDinterwordspacing}{\spaceskip=0pt\relax}
\providecommand{\BIBentryALTinterwordstretchfactor}{4}
\providecommand{\BIBentryALTinterwordspacing}{\spaceskip=\fontdimen2\font plus
\BIBentryALTinterwordstretchfactor\fontdimen3\font minus
  \fontdimen4\font\relax}
\providecommand{\BIBforeignlanguage}[2]{{%
\expandafter\ifx\csname l@#1\endcsname\relax
\typeout{** WARNING: IEEEtran.bst: No hyphenation pattern has been}%
\typeout{** loaded for the language `#1'. Using the pattern for}%
\typeout{** the default language instead.}%
\else
\language=\csname l@#1\endcsname
\fi
#2}}
\providecommand{\BIBdecl}{\relax}
\BIBdecl

\bibitem{beta-div}
A.~Basu, I.~R. Harris, N.~L. Hjort, and M.~C. Jones, ``{Robust and efficient
  estimation by minimising a density power divergence},'' \emph{Biometrika},
  vol.~85, no.~3, pp. 549--559, 1998.

\bibitem{Cichocki2010}
A.~Cichocki and S.~Amari, ``Families of alpha- beta- and gamma- divergences:
  Flexible and robust measures of similarities,'' \emph{Entropy}, vol.~12,
  no.~6, pp. 1532--1568, 2010.

\bibitem{scott2001}
D.~W. Scott, ``Parametric statistical modeling by minimum integrated square
  error,'' \emph{Technometrics}, vol.~43, no.~3, pp. 274--285, 2001.

\bibitem{gamma-div}
H.~Fujisawa and S.~Eguchi, ``Robust parameter estimation with a small bias
  against heavy contamination,'' \emph{Journal of Multivariate Analysis},
  vol.~99, no.~9, pp. 2053--2081, 2008.

\bibitem{type0-div}
M.~C. Jones, N.~L. Hjort, I.~R. Harris, and A.~Basu, ``{A comparison of related
  density-based minimum divergence estimators},'' \emph{Biometrika}, vol.~88,
  no.~3, pp. 865--873, 2001.

\bibitem{fujisawa2013}
H.~Fujisawa, ``Normalized estimating equation for robust parameter
  estimation,'' \emph{Electronic Journal of Statistics}, vol.~7, pp.
  1587--1606, 2013.

\bibitem{FDPD}
S.~Ray, S.~Pal, S.~K. Kar, and A.~Basu, ``Characterizing the functional density
  power divergence class,'' \emph{IEEE Transactions on Information Theory},
  vol.~69, no.~2, pp. 1141--1146, 2023.

\bibitem{bridge-div}
A.~K. Kuchibhotla, S.~Mukherjee, and A.~Basu, ``Statistical inference based on
  bridge divergences,'' \emph{Annals of the Institute of Statistical
  Mathematics}, vol.~71, no.~3, pp. 627--656, 2019.

\bibitem{Gayen2024}
A.~Gayen, S.~Roy, and A.~K. Gangopadhyay, ``A unified approach to the
  {P}ythagorean identity and projection theorem for a class of divergences
  based on {M}-estimations,'' \emph{Statistics}, vol.~58, no.~4, pp. 842--880,
  2024.

\bibitem{holder-div1}
T.~Kanamori and H.~Fujisawa, ``Affine invariant divergences associated with
  proper composite scoring rules and their applications,'' \emph{Bernoulli},
  vol.~20, no.~4, pp. 2278--2304, 2014.

\bibitem{Jana2019}
S.~Jana and A.~Basu, ``A characterization of all single-integral, non-kernel
  divergence estimators,'' \emph{IEEE Transactions on Information Theory},
  vol.~65, no.~12, pp. 7976--7984, 2019.

\bibitem{robust}
F.~R. Hampel, E.~M. Ronchetti, P.~J. Rousseeuw, and W.~A. Stahel, \emph{Robust
  Statistics: The Approach Based on Influence Functions}.\hskip 1em plus 0.5em
  minus 0.4em\relax John Wiley \& Sons, 2005.

\bibitem{huberbook}
P.~J. Huber and E.~M. Ronchetti, \emph{Robust Statistics}, 2nd~ed.\hskip 1em
  plus 0.5em minus 0.4em\relax John Wiley \& Sons, 2009.

\bibitem{asympt}
A.~W. van~der Vaart, \emph{Asymptotic Statistics}.\hskip 1em plus 0.5em minus
  0.4em\relax Cambridge University Press, 1998.

\bibitem{Evgeni2018}
E.~Y. Ovcharov, ``Proper scoring rules and {B}regman divergence,''
  \emph{Bernoulli}, vol.~24, no.~1, pp. 53--79, 2018.

\bibitem{Frigyik2008}
B.~A. Frigyik, S.~Srivastava, and M.~R. Gupta, ``Functional {B}regman
  divergence and {B}ayesian estimation of distributions,'' \emph{IEEE
  Transactions on Information Theory}, vol.~54, no.~11, pp. 5130--5139, 2008.

\bibitem{Gneiting2007}
T.~Gneiting and A.~E. Raftery, ``Strictly proper scoring rules, prediction, and
  estimation,'' \emph{Journal of the American Statistical Association}, vol.
  102, no. 477, pp. 359--378, 2007.

\bibitem{bexp-div}
T.~Mukherjee, A.~Mandal, and A.~Basu, ``The {B}-exponential divergence and its
  generalizations with applications to parametric estimation,''
  \emph{Statistical Methods {\&} Applications}, vol.~28, no.~2, pp. 241--257,
  2019.

\bibitem{Roy2019}
S.~Roy, K.~Chakraborty, S.~Bhadra, and A.~Basu, ``Density power downweighting
  and robust inference: Some new strategies,'' \emph{Journal of Mathematics and
  Statistics}, vol.~15, pp. 333--353, 2019.

\bibitem{Singh2021}
P.~Singh, A.~Mandal, and A.~Basu, ``Robust inference using the
  exponential-polynomial divergence,'' \emph{Journal of Statistical Theory and
  Practice}, vol.~15, no.~2, pp. 1--22, 2021.

\bibitem{u-boost}
N.~Murata, T.~Takenouchi, T.~Kanamori, and S.~Eguchi, ``Information geometry of
  ${U}$-boost and {B}regman divergence,'' \emph{Neural Computation}, vol.~16,
  no.~7, pp. 1437--1481, 2004.

\bibitem{select_tuning}
S.~Basak, A.~Basu, and M.~C. Jones, ``On the {‘}optimal{’} density power
  divergence tuning parameter,'' \emph{Journal of Applied Statistics}, vol.~48,
  no.~3, pp. 536--556, 2021.

\bibitem{Warwick2005}
J.~Warwick and M.~C. Jones, ``Choosing a robustness tuning parameter,''
  \emph{Journal of Statistical Computation and Simulation}, vol.~75, no.~7, pp.
  581--588, 2005.

\end{thebibliography}

\onecolumn
\appendices
\section{Functional density power divergence \cite{FDPD}}\label{sec:fdpd}
Let $v:\mathbb{R}_+\to\mathbb{R}\cup\{\pm\infty\}$ be a function such that $w(x)=v(\exp(x))$, where $w:\mathbb{R}\cup\{-\infty\}\to\mathbb{R}\cup\{\pm\infty\}$ is a strictly increasing and convex function.
When $\gamma>0$, the functional density power cross-entroy (FDPCE) and FDPD are defined as
\begin{align}
    d_{v, \gamma}^{\rm FDP}(q,p)&=-\frac{1+\gamma}{\gamma}v\left(\left\langle qp^\gamma \right\rangle\right) + \frac{1}{\gamma}v(1) + 
    v\left(\left\langle p^{1+\gamma}\right\rangle\right), \label{eq:fdpce}\\
    D_{v, \gamma}^{\rm FDP}(q,p)&=\frac{1}{\gamma}v\left(\langle q^{1+\gamma}\right\rangle) - \frac{1+\gamma}{\gamma}v\left(\left\langle qp^\gamma\right\rangle\right) + v\left(\left\langle p^{1+\gamma}\right\rangle\right), \label{eq:fdpd}
\end{align}
respectively.
When $\gamma=0$, the FDPCE and FDPD are defined as
\begin{align*}
    d_{v, \gamma}^{\rm FDP}(q,p)=-v'(1)\left\langle q\log p\right\rangle,\\
    D_{v, \gamma}^{\rm FDP}(q,p)=v'(1)\left\langle q\log\frac{q}{p}\right \rangle,
\end{align*}
respectively.

\begin{example}
    When we set $v(z)=z$, the FDPCE \eqref{eq:fdpce} and FDPD \eqref{eq:fdpd} reduce to DPCE \eqref{eq:dpce} and the DPD \eqref{eq:dp}, respectively.
\end{example}
\begin{example}
    When we set $v(z)=\log z$, the FDPCE \eqref{eq:fdpce} reduces to log-$\gamma$-cross-entropy \eqref{eq:log_ce}, and the FDPD \eqref{eq:fdpd} reduces to the log-$\gamma$-divergence \cite{gamma-div} as follows:
    \begin{align}
        D_{\gamma}^{\log}(q,p)=\frac{1}{\gamma}\log\left(\langle q^{1+\gamma}\right\rangle) - \frac{1+\gamma}{\gamma}\log\left(\left\langle qp^\gamma\right\rangle\right) + \log\left(\left\langle p^{1+\gamma}\right\rangle\right). \label{eq:log_gamma_div}
    \end{align}
\end{example}
\begin{example}
    When we set $v(z)=\ln_{1-\zeta}(z)=(z^\zeta-1)/\zeta, \;(0\leq \zeta\leq1)$, the FDPCE \eqref{eq:fdpce} and FDPD \eqref{eq:fdpd} reduce to Jones et al.'s \cite{type0-div} general cross-entropy and divergence as follows:
    \begin{align*}
        d_{\zeta, \gamma}(q,p)=&-\frac{1+\gamma}{\gamma}\ln_{1-\zeta}\left(\left\langle qp^\gamma \right\rangle\right) + \frac{1}{\gamma}\ln_{1-\zeta}(1) + \ln_{1-\zeta}\left(\left\langle p^{1+\gamma}\right\rangle\right),\\
        D_{\zeta, \gamma}(q,p)=&\frac{1}{\gamma}\ln_{1-\zeta}\left(\langle q^{1+\gamma}\right\rangle) - \frac{1+\gamma}{\gamma}\ln_{1-\zeta}\left(\left\langle qp^\gamma\right\rangle\right)+ \ln_{1-\zeta}\left(\left\langle p^{1+\gamma}\right\rangle\right),
    \end{align*}
    respectively.
    When $\zeta=0$, these cross-entropy and divergence reduce to log-$\gamma$-cross-entropy \eqref{eq:log_ce} and log-$\gamma$-divergence \eqref{eq:log_gamma_div}, respectively.
    From $\lim_{\zeta\to1}\ln_{1-\zeta}z=\log z$, when $\zeta\to1$, these cross-entropy and divergence reduce to the DPCE \eqref{eq:dpce} and DPD \eqref{eq:dp}, respectively.
\end{example}
\begin{example}\label{example:bdpd}
    When we set $v(z)=\log(\lambda_1+\lambda_2 z)/\lambda_2, \;(\bm{\lambda}\in\mathbb{R}_+^2)$, the FDPCE \eqref{eq:fdpce} reduces to the log-type BDPCE \eqref{eq:log_bdpce} and the FDPD \eqref{eq:fdpd} reduce to the log-type BDPD \cite{bridge-div, Gayen2024} as follows:
    \begin{align*}
        \tilde{D}_{\bm{\lambda}, \gamma}^{\rm BDP}(q,p)=\frac{1}{\lambda_2\gamma}\log\left(\lambda_1+\lambda_2\langle q^{1+\gamma}\right\rangle) - \frac{1+\gamma}{\lambda_2\gamma}\log\left(\lambda_1+\lambda_2\left\langle qp^\gamma\right\rangle\right) + \frac{1}{\lambda_2}\log\left(\lambda_1+\lambda_2\left\langle p^{1+\gamma}\right\rangle\right).
    \end{align*}
    When $\lambda_1=0, \lambda_2\neq0$, the BDPCE \eqref{eq:log_bdpce} reduces to a constant
multiple of the log-$\gamma$-cross-entropy \eqref{eq:log_ce}, as follows:
    \begin{align*}
        \tilde{d}_{\bm{\lambda}, \gamma}^{\rm BDP}(q,p) = \frac{1}{\lambda_2}d_{\gamma}^{\log}(q,p).
    \end{align*}
    When $\lambda_1\neq0, \lambda_2\to0$, the BDPCE \eqref{eq:log_bdpce} reduces to a constant
multiple of the DPCE \eqref{eq:dpce}, as follows:
    \begin{align*}
        &\lim_{\lambda_2\to0}\tilde{d}_{\bm{\lambda}, \gamma}^{\rm BDP}(q,p) = \lim_{\lambda_2\to0}\left[-\frac{1+\gamma}{\lambda_2\gamma}\log\Bigl(\lambda_1+\lambda_2\langle qp^{\gamma}\rangle\Bigr) + \frac{1}{\lambda_2}\log \Bigl(\lambda_1+\lambda_2\langle p^{1+\gamma}\rangle\Bigr) -\frac{1}{\lambda_2\gamma}\log(\lambda_1+\lambda_2)\right]\\
        &=\lim_{\lambda_2\to0}\left[-\frac{1+\gamma}{\gamma}\frac{\partial\log\left(\lambda_1+\lambda_2\left\langle qp^\gamma\right\rangle\right)}{\partial\lambda_2}+\frac{\partial\log\left(\lambda_1+\lambda_2\left\langle p^{1+\gamma}\right\rangle\right)}{\partial\lambda_2}-\frac{1}{\gamma}\frac{\partial\log(\lambda_1+\lambda_2)}{\partial\lambda_2}\right]\\
        &=\lim_{\lambda_2\to0}\left[-\frac{1+\gamma}{\gamma}\frac{\left\langle qp^\gamma\right\rangle}{\lambda_1+\lambda_2\left\langle qp^\gamma\right\rangle}+\frac{\left\langle p^{1+\gamma}\right\rangle}{\lambda_1+\lambda_2\left\langle p^{1+\gamma}\right\rangle}-\frac{1}{\gamma}\frac{1}{\lambda_1+\lambda_2}\right]
        =\frac{1}{\lambda_1}d_{\gamma}^{\rm DP}(q,p).
    \end{align*}
\end{example}

\section{H\"{o}lder divergence \cite{holder-div1}}\label{sec:hd}
Let $H:\mathbb{R}_+\to\mathbb{R}$ be a function satisfying $H(z)\geq -z^{1+\gamma}$ for any $z\geq0$ and $H(1)=-1$.
When $\gamma>0$, the H\"{o}lder cross-entropy (HCE) and HD are defined as
\begin{align}
    d_{H, \gamma}(q, p) &= \frac{1}{\gamma}H\left(\frac{\left\langle qp^\gamma \right\rangle}{\left\langle p^{1+\gamma}\right\rangle}\right)\left\langle p^{1+\gamma}\right\rangle + \frac{1}{\gamma}, \label{eq:hce}\\
    D_{H, \gamma}(q, p) &= \frac{1}{\gamma}H\left(\frac{\left\langle qp^\gamma \right\rangle}{\left\langle p^{1+\gamma}\right\rangle}\right)\left\langle p^{1+\gamma}\right\rangle +  \frac{1}{\gamma}\left\langle q^{1+\gamma}\right\rangle, \nonumber
\end{align}
respectively.
The H\"{o}lder entropy is the Tsallis entropy, independent of $H$.
Therefore, in the following, we only show HCE.
\begin{example}
    When we set $H(z)=\gamma-(1+\gamma) z$, the HCE \eqref{eq:hce} reduces to the DPCE \eqref{eq:dpce}.
\end{example}
\begin{example}
    When we set $H(z)=-z^{1+\gamma}$, which is the lower limit of $H$, the HCE \eqref{eq:hce} reduces to the transformation of the PSCE \eqref{eq:psce} as follows:
    \begin{align*}
        d_{H, \gamma}(q,p)=-\frac{1}{\gamma}\frac{\left\langle qp^\gamma \right\rangle^{1+\gamma}}{\left\langle p^{1+\gamma}\right\rangle^\gamma}+\frac{1}{\gamma}.\\
    \end{align*}
\end{example}
\begin{example}
    When we set $H(z)=-\kappa^{\frac{1+\gamma}{\kappa}}|z-1+1/\kappa|^{\frac{1+\gamma}{\kappa}}{\rm sign}(z-1+1/\kappa), \;\kappa\geq1$, the HCE \eqref{eq:hce} reduces to the transformation of the BHCE \eqref{eq:bhce} as follows:
    \begin{align*}
        d_{H, \gamma}(q,p) = -\frac{1}{\gamma}\frac{\left|\kappa\left\langle qp^\gamma\right\rangle -(\kappa-1)\left\langle p^{1+\gamma}\right\rangle\right|^{\frac{1+\gamma}{\kappa}}}{\left\langle p^{1+\gamma}\right\rangle^{\frac{1+\gamma}{\kappa}-1}}
        \cdot{\rm sign}\left(\kappa\left\langle qp^\gamma\right\rangle -(\kappa-1)\left\langle p^{1+\gamma}\right\rangle\right) + \frac{1}{\gamma}.
    \end{align*}
\end{example}

\section{Special cases of Norm-based Bregman density power divergence}\label{sec:example_nb-dpd}
We provide the specific cross-entropy of NB-DPCE \eqref{eq:new_ce} discussed in Section \ref{sec:new_div}.
We also restate the cross-entropy \eqref{eq:comb_cross_ent} that connects BHCE \eqref{eq:bhce} and BDPCE \eqref{eq:psbdpce} as follows:
\begin{align}
    d_{\bm{\lambda}, \kappa, \gamma}(q,p) 
    &= \frac{\kappa}{\lambda_2 \gamma}\Bigl(\lambda_1+\lambda_2\bigl\langle p^{1+\gamma}\bigr\rangle\Bigr)^{\frac{\kappa}{1+\gamma}}\left[1-\frac{1}{\kappa}  -\frac{\lambda_1+\lambda_2\bigl\langle qp^\gamma\bigr\rangle}{\lambda_1+\lambda_2\bigl\langle p^{1+\gamma}\bigr\rangle}\right]
    +\frac{1}{\lambda_2\gamma}\Bigl(\lambda_1+\lambda_2\Bigr)^{\frac{\kappa}{1+\gamma}}. \label{eq:app_comb_cross_ent}
\end{align}
Then the corresponding NB-DPD \eqref{eq:new_div} is given by
\begin{align*}
    D_{\bm{\lambda}, \kappa, \gamma}(q,p)= 
    \frac{\kappa}{\lambda_2 \gamma}\Bigl(\lambda_1+\lambda_2\bigl\langle p^{1+\gamma}\bigr\rangle\Bigr)^{\frac{\kappa}{1+\gamma}}\left[1-\frac{1}{\kappa}  -\frac{\lambda_1+\lambda_2\bigl\langle qp^\gamma\bigr\rangle}{\lambda_1+\lambda_2\bigl\langle p^{1+\gamma}\bigr\rangle}\right]
    +\frac{1}{\lambda_2 \gamma}\Bigl(\lambda_1+\lambda_2\bigl\langle q^{1+\gamma}\bigr\rangle\Bigr)^{\frac{\kappa}{1+\gamma}}.
\end{align*}
When $\lambda_1=0, \lambda_2\neq0, \kappa\geq1$, \eqref{eq:app_comb_cross_ent} reduces to a constant multiple
of the BHCE \eqref{eq:bhce} as follows:
\begin{align*}
    d_{\bm{\lambda}, \kappa, \gamma}(q,p) &= \frac{\kappa}{ \gamma}\lambda_2^{{\frac{\kappa}{1+\gamma}}-1}\bigl\langle p^{1+\gamma}\bigr\rangle^{\frac{\kappa}{1+\gamma}}\left[1-\frac{1}{\kappa}  -\frac{\bigl\langle qp^\gamma\bigr\rangle}{\bigl\langle p^{1+\gamma}\bigr\rangle}\right]
    +\frac{1}{\gamma}\lambda_2^{\frac{\kappa}{1+\gamma}-1}\\
    &=\lambda_2^{\frac{\kappa}{1+\gamma}-1}d_{\kappa, \gamma}^{\rm BH}(q,p).
\end{align*}
When $\lambda_1\neq0, \lambda_2\to0$, \eqref{eq:app_comb_cross_ent} reduces to a constant multiple
of  the DPCE \eqref{eq:dpce} as follows:
\begin{align*}
    &\lim_{\lambda_2\to0}d_{\bm{\lambda}, \kappa, \gamma}(q,p)\\
     =&\lim_{\lambda_2\to0}\left[\frac{\kappa}{\gamma}\frac{\partial}{\partial\lambda_2}\Bigl(\lambda_1+\lambda_2\bigl\langle p^{1+\gamma}\bigr\rangle\Bigr)^{\frac{\kappa}{1+\gamma}}\left[1-\frac{1}{\kappa}  -\frac{\lambda_1+\lambda_2\bigl\langle qp^\gamma\bigr\rangle}{\lambda_1+\lambda_2\bigl\langle p^{1+\gamma}\bigr\rangle}\right]+\frac{1}{\gamma}\frac{\partial}{\partial\lambda_2}\Bigl(\lambda_1+\lambda_2\Bigr)^{\frac{\kappa}{1+\gamma}}\right]\\
     =&\lim_{\lambda_2\to0}\Biggl[\frac{\kappa}{\gamma}\frac{\kappa}{1+\gamma}\left(\lambda_1+\lambda_2\left\langle p^{1+\gamma}\right\rangle\right)^{\frac{\kappa}{1+\gamma}-1}\left\langle p^{1+\gamma}\right\rangle \left[1-\frac{1}{\kappa}  -\frac{\lambda_1+\lambda_2\bigl\langle qp^\gamma\bigr\rangle}{\lambda_1+\lambda_2\bigl\langle p^{1+\gamma}\bigr\rangle}\right]\\
     &-\frac{\kappa}{\gamma}\left(\lambda_1+\lambda_2 \left\langle p^{1+\gamma}\right\rangle\right)^{\frac{\kappa}{1+\gamma}}\frac{\left\langle qp^\gamma\right\rangle\left(\lambda_1+\lambda_2 \left\langle p^{1+\gamma}\right\rangle\right) - \left(\lambda_1+\lambda_2 \left\langle qp^{\gamma}\right\rangle\right)\left\langle p^{1+\gamma}\right\rangle}{\left(\lambda_1+\lambda_2\left\langle  p^{1+\gamma}\right\rangle\right)^2}
     +\frac{1}{\gamma}\frac{\kappa}{1+\gamma}\left(\lambda_1+\lambda_2\right)^{\frac{\kappa}{1+\gamma}-1}\Biggr]\\
     =&\frac{\kappa\lambda_1^{\frac{\kappa}{1+\gamma}-1}}{1+\gamma}\left[ -\frac{1+\gamma}{\gamma}\left\langle qp^\gamma\right\rangle+\left\langle p^{1+\gamma}\right\rangle +\frac{1}{\gamma}\right] =\frac{\kappa\lambda_1^{\frac{\kappa}{1+\gamma}-1}}{1+\gamma}d_{\gamma}^{\rm DP}(q,p).
\end{align*}
When $\lambda_2\neq0, \kappa=1+\gamma$, \eqref{eq:app_comb_cross_ent} reduces to the DPCE \eqref{eq:dpce} as follows:
\begin{align*}
    d_{\bm{\lambda}, \kappa, \gamma}(q,p)
    &= \frac{1+\gamma}{\lambda_2 \gamma}\Bigl(\lambda_1+\lambda_2\bigl\langle p^{1+\gamma}\bigr\rangle\Bigr)\left[\frac{\gamma}{1+\gamma}  -\frac{\lambda_1+\lambda_2\bigl\langle qp^\gamma\bigr\rangle}{\lambda_1+\lambda_2\bigl\langle p^{1+\gamma}\bigr\rangle}\right]
    +\frac{1}{\lambda_2\gamma}\Bigl(\lambda_1+\lambda_2\Bigr)\\
    &= \frac{\lambda_1}{\lambda_2}+\bigl\langle p^{1+\gamma}\bigr\rangle -\frac{\lambda_1(1+\gamma)}{\lambda_2 \gamma}-\frac{1+\gamma}{\gamma}\left\langle qp^\gamma\right\rangle
    +\frac{\lambda_1}{\lambda_2\gamma}+\frac{1}{\gamma}\\
    &= \bigl\langle p^{1+\gamma}\bigr\rangle -\frac{1+\gamma}{\gamma}\left\langle qp^\gamma\right\rangle +\frac{1}{\gamma}=d_{\gamma}^{\rm DP}(q,p).
\end{align*}
When $\lambda_1\neq0, \lambda_2\neq0, \kappa=1$, \eqref{eq:app_comb_cross_ent} reduces to the PS-type BDPCE \eqref{eq:psbdpce} as follows:
\begin{align*}
    d_{\bm{\lambda}, \kappa, \gamma}(q,p)
    &=-\frac{1}{\lambda_2 \gamma}\frac{\lambda_1+\lambda_2\bigl\langle qp^\gamma\bigr\rangle}{\Bigl(\lambda_1+\lambda_2\bigl\langle p^{1+\gamma}\bigr\rangle\Bigr)^{\frac{\gamma}{1+\gamma}}}
    +\frac{1}{\lambda_2\gamma}\left(\lambda_1+\lambda_2\right)^{\frac{1}{1+\gamma}}\\
    &=d_{\bm{\lambda}, \gamma}^{\rm BDP}(q,p).
\end{align*}
When, $\lambda_1=0, \lambda_2\neq0, \kappa=1$, \eqref{eq:app_comb_cross_ent} reduces to a constant multiple of the PSCE \eqref{eq:psce} as follows:
\begin{align*}
    d_{\bm{\lambda}, \kappa, \gamma}(q,p) = \lambda_2^{-\frac{\gamma}{1+\gamma}}d_{\gamma}^{\rm PS}(q,p).
\end{align*}

\section{Proof of Theorem \ref{thm:NB_FDPD_intersec}}\label{sec:intersec_fdpd_nb-dpd}
We prove that the intersection between NB-DPD \eqref{eq:new_div} and FDPD \cite{FDPD} is limited to the BDPD \cite{bridge-div}. 
Since NB-DPD belongs to the class of Bregman divergences, it does not depend on $\phi$ and can be reduced to M-estimation \cite{Gneiting2007}. 
On the other hand, FDPD is not always reducible to M-estimation.
Therefore, it suffices to show that only BDPD can be reduced to M-estimation among the FDPD class.

From the FDPCE \eqref{eq:fdpce}, when we substitute the empirical distribution $\hat{q}(x)=\frac{1}{n}\sum_{i=1}^n \delta(x-x_i)$ for $q$ and the model $p_\theta(x)$ for $p$, the loss function with respect to $\theta$ is as follows:
\begin{align*}
    -\frac{1+\gamma}{\gamma}\,v\Bigl(\frac{1}{n}\sum_{i=1}^n p_\theta(x_i)^\gamma\Bigr)
    + v\Bigl(\bigl\langle p_\theta^{1+\gamma}\bigr\rangle\Bigr),
\end{align*}
where $\delta$ is the Dirac delta function.
By taking the partial derivative of this loss function with respect to $\theta$ and setting it to zero, we derive the following estimating equation:
\begin{align}
    -v'\!\Bigl(\frac{1}{n}\sum_{i=1}^n p_\theta(x_i)^\gamma\Bigr)\frac{1}{n}\sum_{i=1}^n p_\theta(x_i)^\gamma s_\theta(x_i) +v'\!\Bigl(\bigl\langle p_\theta^{1+\gamma}\bigr\rangle\Bigr)\,\bigl\langle p_\theta^{1+\gamma}s_\theta\bigr\rangle=0,
    \label{eq:fdpd_ee}
\end{align}
where $s_\theta(x) = \partial \log p_\theta(x) / \partial \theta$ denotes the score function. In general, the estimating equation for M-estimation takes the following form:
\begin{align*}
    \frac{1}{n}\sum_{i=1}^n \psi(x_i, \theta)=0.
\end{align*}
Hence, the function $\psi$ that depends on $x_i$ must appear in a linearly additive form.
In \eqref{eq:fdpd_ee}, the factor $v'\left(\frac{1}{n}\sum_{i=1}^n p_\theta(x_i)^\gamma\right)$ in the first term must either be constant or be eliminated from the first term by multiplying both sides of \eqref{eq:fdpd_ee} by some value, so that only the remaining part of the first term is preserved.
If $v'(z) = \text{const.}$, \eqref{eq:fdpd_ee} becomes:
\begin{align*}
    -\frac{1}{n}\sum_{i=1}^n p_\theta(x_i)^\gamma s_\theta(x_i)
    + \bigl\langle p_\theta^{1+\gamma} s_\theta\bigr\rangle = 0,
\end{align*}
which corresponds to the following $\psi$:
\begin{align*}
    \psi(x,\theta)
    = - p_\theta(x)^\gamma s_\theta(x) + \bigl\langle p_\theta^{1+\gamma}s_\theta\bigr\rangle.
\end{align*}
In this case, $v(z)=a\,z+b$ with $a>0$ and $b\in\mathbb{R}$, which corresponds to the DPD \cite{beta-div}.
If $v'(z)=1/(\lambda_1+\lambda_2z)$ for $\lambda_1\ge0$ and $\lambda_2>0$, \eqref{eq:fdpd_ee} becomes
\begin{align*}
    -\frac{\frac{1}{n}\sum_{i=1}^n p_\theta(x_i)^\gamma s_\theta(x_i)}{\lambda_1 + \lambda_2 \frac{1}{n}\sum_{i=1}^n p_\theta(x_i)^\gamma}
    + \frac{\bigl\langle p_\theta^{1+\gamma} s_\theta\bigr\rangle}{
      \lambda_1 + \lambda_2\bigl\langle p_\theta^{1+\gamma}\bigr\rangle}
    = 0,
\end{align*}
\begin{align*}
    -\frac{1}{n}\sum_{i=1}^n p_\theta(x_i)^\gamma s_\theta(x_i)\Bigl(\lambda_1 + \lambda_2\bigl\langle p_\theta^{1+\gamma}\bigr\rangle\Bigr) +\bigl\langle p_\theta^{1+\gamma}s_\theta\bigr\rangle
    \Bigl(\lambda_1 + \lambda_2 \frac{1}{n}\sum_{i=1}^n p_\theta(x_i)^\gamma\Bigr) = 0.
\end{align*}
This corresponds the following $\psi$ function:
\begin{align*}
    \psi(x,\theta)
    = -p_\theta(x)^\gamma s_\theta(x)\Bigl(\lambda_1 + \lambda_2 \bigl\langle p_\theta^{1+\gamma}\bigr\rangle\Bigr) + \bigl\langle p_\theta^{1+\gamma}s_\theta\bigr\rangle\Bigl(\lambda_1 + \lambda_2\,p_\theta(x)^\gamma\Bigr).
\end{align*}
In this case, $v(z)=a\log\bigl(\lambda_1+\lambda_2 z\bigr)+b$ with $a>0$ and $b\in\mathbb{R}$, which corresponds to the BDPD \cite{bridge-div, Gayen2024}.
From the above observations, we conclude that among the FDPD, only the BDPD can be reduced to M-estimation.
Therefore, the intersection between the NB-DPD and FDPD is limited to the BDPD.

\section{Proof of Theorem \ref{thm:redscending}}\label{sec:redscending}
We prove that within the NB-DPD \eqref{eq:new_div}, only the PSD \eqref{eq:psd} satisfies the redescending property.
The influence function of the M-estimator is proportional to the $\psi$ function of the estimating equation \eqref{eq:m_estim_ee}:
\begin{align*}
    {\rm IF}(x_o, \hat{\theta}, p^*)=\lim_{\varepsilon\to0}\frac{\hat{\theta}\bigl((1-\varepsilon)p^*+\varepsilon c(x_o)\bigr) - \hat{\theta}\bigl(p^*\bigr)}{\varepsilon}
    \propto\psi(x_o, \theta).
\end{align*}
The influence of large outliers is ignored when $\lim_{x_o\to\infty}{\rm IF}(x_o, \hat{\theta}, p^*)=0$ holds.
This property is referred to as redescending.
In M-estimation, it is sufficient to examine whether the following equation holds
\begin{align*}
    \lim_{x_o\to\infty}\psi(x_o, \theta)=0.
\end{align*}
By substituting an empirical distribution $\hat{q}$ for $q$ and a model for $p_\theta$ in NB-DPCE \eqref{eq:new_ce}, we obtain the following loss function:
\begin{align*}
    d_{\phi,\gamma}^{\rm NB}(\hat{q},p_\theta) = -\frac{\phi_\gamma\bigl(\|p_\theta\|_{1+\gamma}\bigr) -\phi_0(1)}{\gamma} - \frac{\phi_\gamma'\bigl(\|p_\theta\|_{1+\gamma}\bigr)\bigl(\frac{1}{n}\sum_{i=1}^n p_\theta^\gamma(x_i) - \bigl\langle p_\theta^{1+\gamma}\bigr\rangle\bigr)}{\gamma\,\|p_\theta\|_{1+\gamma}^\gamma}.
\end{align*}
The derivative of this loss function with respect to $\theta$ is given by
\begin{align*}
\frac{\partial d_{\phi, \gamma}^{\rm NB}(\hat{q},p_\theta)}{\partial \theta}=&-\frac{1}{\gamma}\frac{1}{\|p_\theta\|_{1+\gamma}^{1+2\gamma}}\biggl[\left\langle p_\theta^{1+\gamma}s_\theta\right\rangle \biggl(\|p_\theta\|_{1+\gamma}\phi_\gamma''\left(\|p_\theta\|_{1+\gamma}\right)-\gamma\phi_\gamma'\left(\|p_\theta\|_{1+\gamma}\right)\biggr)\left(\frac{1}{n}\sum_{i=1}^np_\theta(x_i)^\gamma  - \left\langle p_\theta^{1+\gamma}\right\rangle\right)\\
&+\gamma \phi_\gamma'\left(\|p_\theta\|_{1+\gamma}\right)\left\langle p_\theta^{1+\gamma}\right\rangle\left(\frac{1}{n}\sum_{i=1}^n p_\theta(x_i)^\gamma s_\theta(x_i) - \left\langle p_\theta^{1+\gamma}s_\theta\right\rangle\right)\biggr].
\end{align*}
Therefore, by comparing \eqref{eq:m_estim_ee} and $\frac{\partial d_{\phi, \gamma}^{\rm NB}(\hat{q},p_\theta)}{\partial \theta}=0$, $\psi$ is given by the following equation:
\begin{align*}
    \psi(x,\theta)=& -\left\langle p_\theta^{1+\gamma}s_\theta\right\rangle \biggl(\|p_\theta\|_{1+\gamma}\phi_\gamma''\left(\|p_\theta\|_{1+\gamma}\right)-\gamma\phi_\gamma'\left(\|p_\theta\|_{1+\gamma}\right)\biggr)\left(p_\theta(x)^\gamma  - \left\langle p_\theta^{1+\gamma}\right\rangle\right)\\
&-\gamma \phi_\gamma'\left(\|p_\theta\|_{1+\gamma}\right)\left\langle p_\theta^{1+\gamma}\right\rangle\left( p_\theta(x)^\gamma s_\theta(x) - \left\langle p_\theta^{1+\gamma}s_\theta\right\rangle\right).
\end{align*}
Under assumptions $\lim_{x\to\infty}p_\theta(x)^{1+\gamma}=0$ and $\lim_{x\to\infty}p_\theta(x)^{1+\gamma}s_\theta(x)=0$, 
\begin{align*}
    \lim_{x_o\to\infty} \psi(x_o,\theta) = \left\langle p_\theta^{1+\gamma}\right\rangle \left\langle p_\theta^{1+\gamma}s_\theta\right\rangle\left\|p_\theta\right\|_{1+\gamma}\phi''_\gamma\left(\left\|p_\theta\right\|_{1+\gamma}\right)=0
\end{align*}
holds if and only if $\phi_\gamma''(z)=0$.
The redescending property holds only when $\phi_\gamma(z)=az+b, \;a>0, b\in\mathbb{R}$, that is, in the case of the PSD \eqref{eq:psd}.

\end{document}